\newcommand\semihuge{\@setfontsize\semihuge{22.3}{22}}
\newtheorem{theorem}{\bf Theorem}
\begin{document}
	%
	\title{\semihuge{Deep Learning-Based Dynamic Watermarking for Secure Signal Authentication in the Internet of Things}}
	\IEEEoverridecommandlockouts
	\author{\IEEEauthorblockN{Aidin Ferdowsi\IEEEauthorrefmark{1} and Walid Saad\IEEEauthorrefmark{1}\\}
		\IEEEauthorblockA{\IEEEauthorrefmark{1}
			Wireless@VT, Bradley Department of Electrical and Computer Engineering, \\ Virginia Tech, Blacksburg, VA, USA,
			Emails: \{aidin,walids\}@vt.edu\\}\vspace{-1cm}
		\thanks{This research was supported by the U.S. National Science Foundation under Grants OAC-1541105 and CNS-1524634.}
	}
	\maketitle
	
	\maketitle
	

	%
	\IEEEpeerreviewmaketitle
	
\begin{abstract}	
Securing the Internet of Things (IoT) is a necessary milestone toward expediting the deployment of its applications and services. In particular, the functionality of the IoT devices are extremely dependent on the reliability of their message transmission. Cyber attacks such as data injection, eavesdropping, and man-in-the-middle threats can lead to security challenges. Securing IoT devices against such attacks requires accounting for their stringent computational power and need for low-latency operations. In this paper, a novel deep learning method is proposed for dynamic watermarking of IoT signals to detect cyber attacks. The proposed learning framework, based on a long short-term memory (LSTM) structure, enables the IoT devices to extract a set of stochastic features from their generated signal and dynamically watermark these features into the signal. This method enables the IoT's cloud center, which collects signals from the IoT devices, to effectively authenticate the reliability of the signals. Furthermore, the proposed method prevents complicated attack scenarios such as eavesdropping in which the cyber attacker collects the data from the IoT devices and aims to break the watermarking algorithm. Simulation results show that, with an attack detection delay of under 1 second the messages can be transmitted from IoT devices with an almost $ 100\% $ reliability.
\end{abstract}
\section{Introduction}
The Internet of Things (IoT) will encompass a massive number of devices that must reliably transmit a diverse set of messages observed from their environment \cite{IoT2017Dawy}. The IoT will have applications in multiple areas including health care \cite{IoT2015Islam}, smart grids \cite{smartgrid2010Farhangi}, drones\cite{UAV2016Mozaffari} and industrial monitoring \cite{IoT2014Xu}. However, an effective deployment of these diverse IoT services near real-time, reliable, secure, and low complexity message transmission from the IoT devices.   

Most IoT architectures consist of four layers: perceptual, network, support, and application layer\cite{security2012Suo}. The perceptual layer is the most fundamental layer which collects all kind of information from the physical world using RFID, GPS, accelerometer, and gyroscope data. The other three layers are centered around communication and computation of IoT signals as well as personalized services for end users. Due to simplicity of the devices and components at the perceptual layer, their resource-constrained nature, and their low computational and storage capabilities, securing the IoT signals at this layer is notoriously challenging. 

Recently, a number of security solutions have been proposed for IoT signal protection \cite{Physical2015Mukherjee,physical2015Trappe,lightweight2014Lee,security2016Yaman}. The work in \cite{Physical2015Mukherjee} investigated physical layer security techniques in IoT scenario and presented
some methods that are suitable for protecting IoT applications. These physical layer security methods include optimal sensor censoring, channel-based bit flipping, probabilistic ciphering of quantized IoT signals, and artificial noise signal transmission. In \cite{physical2015Trappe}, the author suggested bridging the security gap in IoT devices by applying information theory and cryptography
at the physical layer of the IoT. An authentication protocol for the IoT is presented in \cite{lightweight2014Lee}, using lightweight encryption method in order to cope with constrained IoT devices. Moreover, in \cite{security2016Yaman}, authors developed a learning mechanisms for fingerprinting and authenticating IoT devices and their environment.

To provide further security to IoT-like cyber-physical systems (CPSs), the idea of watermarking has been studied in \cite{watermark2015Mo,Watermarking2017Satchidanandan,hespanhol2017dynamic,watermark2017Hosseini}. In \cite{watermark2015Mo}, a method was proposed to watermark a predefined signal unto a CPS input signal that enables detection of replay attacks in which an adversary repeats a sequence of past measurements. A dynamic watermarking algorithm was proposed in \cite{Watermarking2017Satchidanandan} for integrity attack detection in networked CPSs. In \cite{hespanhol2017dynamic}, the authors introduced a security scheme that ensures detection of attacks that add a nonzero-average-power signal to sensor measurements using a non-stationary watermarking technique. Finally, the authors in \cite{watermark2017Hosseini} analyzed the optimality of Gaussian watermarked signals against cyber attacks in linear time-variant IoT-like systems.

However, the solutions of \cite{Physical2015Mukherjee,physical2015Trappe,lightweight2014Lee,security2016Yaman} remain highly complex for deployment at the IoT perceptual layer, and require high computational power. Moreover, these methods do not take into account complex attacks such as eavesdropping in which the attacker collects data for a long time duration and uses it for designing undetectable attacks. Furthermore, the watermarking algorithms introduced in \cite{watermark2015Mo,Watermarking2017Satchidanandan,hespanhol2017dynamic,watermark2017Hosseini}, can be detrimental to the performance of the CPS since the augmented watermark is applied in parallel to the control signal of the system. This can, in turn, cause non-optimality in the performance of the system. In addition, the input signals to a CPS, in most cases, are physical signals such as a sensor or the mechanical power input to a generator, and therefore, are not applicable to IoT devices since altering the physical input to an IoT device such as thermometer requires changing the IoT devices' environment.

The main contribution of this paper is a deep learning framework for dynamic watermarking of IoT signals. This framework enables the IoT's \emph{cloud} to authenticate the reliability of signals and detect existence of a cyber attacker who seeks to degrade the performance of the IoT by changing the devices' output signal. The proposed deep learning algorithm uses long short-term memory (LSTM) \cite{chen2017machine} blocks to extract stochastic features such as spectral flatness, skewness, kurtosis, and central moments from IoT signal and watermarks these features inside the original signal. This dynamic feature extraction allows the cloud to detect sophisticated eavesdropping attacks, since the attacker will not be able to extract the watermarked information. Moreover, the proposed LSTM reduces complexity and latency of the attack detection compared to other security methods such as encryption. This allows LSTM to effectively complement the cryptographic and security solutions of an IoT. Simulation results show that, using the proposed approach and for under $ 1 $ second latency, the IoT signals can be reliably transmitted from IoT devices to the cloud.

The rest of the paper is organized as follows. Section \ref{sect:sensor} introduces spread spectrum watermarking for IoT attack detection. Section \ref{sect:DeepMark} presents the proposed learning algorithm. Section \ref{sect:sim} analyzes the simulation and conclusions are drawn in Section \ref{sect:conc}.
\section{Signal Authentication and Attack Detection Using Watermarking}\label{sect:sensor}
Consider an IoT device (IoTD) which generates a signal $ y(t) $ at time step $ t $ with sampling frequency $ f_s $, and transmits this signal to a cloud, that uses the received signal is used for estimation and control of the IoTD operation. In this system, we consider an adversary who seeks to compromise the IoTD or the communication link between the IoTD and cloud, and, then, manipulates the transmitted signal. In this case, the transmitted signal will be $ \bar{y}(t) \neq y(t) $ which will cause an estimation error at the cloud. Therefore, the cloud must implement an attack detection mechanism to differentiate between the honest and false signal received from an IoTD. Here, we propose a watermarking scheme for effective IoTD signal authentication and attack detection.
\subsection{Spread Spectrum Watermarking}
Watermarking uses a hidden, predefined non-perceptual code (bit stream) inside a multimedia signal to authenticate the ownership of such signals. One of the widely used watermarking methods is spread spectrum (SS) \cite{Improved2003Malvar} in which a key pseudo-noise sequence is added to the original signal. The watermarked signal can then be written as follows:
\begin{align}\label{watermarking}
	w(t)=y(t)+\beta b p(t) \quad \textrm{for } t=1,\dots,n,
\end{align}
where $ w $ is the watermarked signal, $ p $ is a pseudo-noise binary sequence of $ +1 $ and $ -1 $, $ \beta $ is the relative power of the pseudo-noise signal to the original signal, $ b $ is the hidden bit in the signal which can take values $ +1 $ and $ -1 $, and $ n $ is the number of samples (frame length) of the original signal used to hide a single bit. To extract the watermarked bit, the cloud receives the watermarked signal and correlates it with the key pseudo-noise sequence. The extraction process will be:
	\begin{align}\label{correlation}
		\tilde{b}&=\frac{<w,p>_{n}}{\beta n}=\frac{<y,p>_{n}+\beta b<p,p>_{n}}{\beta n}=\tilde{y}+b,
	\end{align} 
where, for $ \tilde{b}>0 $ the extracted bit is $ 1 $ and for $\tilde{b}<0 $ the extracted bit is $ -1 $ and $ <w,p>_{n} $ is the inner production of $ n $ samples of $ w $ and $ p $. $ p(t) $ and $ y(t) $ are independent stochastic variables at time $ t $ and we consider $ y(t) $ having mean $ \mu $ and variance $ \sigma^2 $. Next, we analyze the bit error rate of the extracted bit.
\begin{theorem} \label{Theorem:BER}
	In the proposed SS watermarking scheme the bit extraction error is $ \frac{1}{2}\textrm{erfc}\left(\frac{\beta\sqrt{n}}{\sigma \sqrt{2}}\right) $.
\end{theorem}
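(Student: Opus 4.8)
The plan is to treat the extracted statistic $\tilde{b} = \tilde{y} + b$ from \eqref{correlation} as a signal-plus-interference quantity, where the deterministic part $b \in \{+1, -1\}$ is the transmitted bit and $\tilde{y} = \frac{<y,p>_n}{\beta n} = \frac{1}{\beta n}\sum_{t=1}^n y(t) p(t)$ is the residual interference from the host signal. Since the decision rule declares $+1$ when $\tilde{b} > 0$ and $-1$ otherwise, the entire problem reduces to characterizing the distribution of $\tilde{y}$ and integrating its tail past the threshold at $\tilde{b}=0$.

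First I would compute the first two moments of $\tilde{y}$. Because the key sequence $p$ is known at the cloud and satisfies $p(t)^2 = 1$, and because $p$ is balanced (equal numbers of $+1$ and $-1$), the host-signal mean averages out and $\E[\tilde{y}] = 0$. For the variance I would use independence of the samples $y(t)$ together with $p(t)^2=1$ to obtain $\operatorname{Var}(\tilde{y}) = \frac{1}{\beta^2 n^2}\sum_{t=1}^n p(t)^2 \sigma^2 = \frac{\sigma^2}{\beta^2 n}$, so that $\tilde{y}$ has zero mean and standard deviation $\sigma/(\beta\sqrt{n})$.

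Next I would invoke the central limit theorem: since $\tilde{y}$ is a normalized sum of $n$ independent terms $y(t)p(t)$, for the frame lengths used in practice it is well approximated by a Gaussian, so $\tilde{y} \sim \mathcal{N}\!\left(0, \sigma^2/(\beta^2 n)\right)$. Conditioning on $b = +1$, a bit error occurs exactly when $\tilde{b} = \tilde{y} + 1 < 0$, i.e. $\tilde{y} < -1$; by the symmetry of the Gaussian, conditioning on $b = -1$ yields the mirror event $\tilde{y} > 1$ with identical probability. Hence the error probability is independent of the transmitted bit and equals $P\!\left(\tilde{y} < -1\right) = \Phi\!\left(-\beta\sqrt{n}/\sigma\right)$, where $\Phi$ is the standard normal CDF. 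Converting this tail via the identity $\Phi(-x) = \tfrac{1}{2}\textrm{erfc}\!\left(x/\sqrt{2}\right)$ then gives the claimed bit error rate $\tfrac{1}{2}\textrm{erfc}\!\left(\beta\sqrt{n}/(\sigma\sqrt{2})\right)$.

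The step I expect to require the most care is the moment computation for $\tilde{y}$: one must be explicit that the host-signal mean $\mu$ drops out (either because $p$ is balanced or because only the centered fluctuations survive correlation with a zero-mean key), so that the variance depends on $\sigma^2$ alone and not on $\mu^2$. Getting this bookkeeping right is precisely what pins the argument of the complementary error function to $\beta\sqrt{n}/\sigma$ rather than a $\mu$-dependent expression, and a secondary point worth a sentence of justification is the Gaussian approximation itself, which is an asymptotic-in-$n$ statement rather than an exact identity.
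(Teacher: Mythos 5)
Your proposal is correct and follows essentially the same route as the paper: approximate $\tilde{y}$ by a zero-mean Gaussian with variance $\sigma^2/(\beta^2 n)$ via the central limit theorem, then read off the error as the Gaussian tail past the threshold, $\frac{1}{2}\textrm{erfc}\bigl(\beta\sqrt{n}/(\sigma\sqrt{2})\bigr)$, with symmetry handling $b=-1$. The only difference is bookkeeping: the paper splits the correlation into the sub-sums over $\mathcal{P}^+$ and $\mathcal{P}^-$ and combines two Gaussians $Y_1 - Y_2$, whereas you compute the moments of the full sum directly — and in doing so you make the balanced-key cancellation of $\mu$ explicit, a point the paper glosses over.
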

\begin{proof}
	For $ \tilde{y} $, we can write:
	\begin{align}
		\tilde{y}&=\frac{1}{\beta n} \sum_{t=1}^{n} y(t)p(t)\nonumber=\frac{1}{\beta n}\left( \sum_{t\in \mathcal{P}^+} y(t)-\sum_{t\in \mathcal{P}^-} y(t)\right)\\\label{sum}
		&=\frac{n_{+}}{\beta n}\frac{1}{n_{+}}\sum_{t\in \mathcal{P}^+} y(t)-\frac{n_{P^-}}{\beta n}\frac{1}{n_{-}}\sum_{t\in \mathcal{P}^-} y(t),
	\end{align}
	where $ \mathcal{P}^+=\{t|p(t)=1\} $,  $ \mathcal{P}^-=\{t|p(t)=-1\} $, $ n_{+}\triangleq |\mathcal{P}^+| $, and $ n_{-}\triangleq |\mathcal{P}^-| $. Since \eqref{sum} can be expressed as sum of i.i.d variables for large values of $ n_{+} $ and $ n_{-} $ , then, using the central limit theorem, we can write \eqref{sum} as linear combination of two Gaussian distributions as $\tilde{y}=Y_1-Y_2, $ 
	where 
	\begin{align}
		Y_1 \sim \mathcal{N}\left(\frac{\mu}{\beta n},\frac{n_{+}}{\beta^2 n^2}\sigma^2\right),\,\,\,Y_2 \sim \mathcal{N}\left(\frac{\mu}{\beta n},\frac{n_{-}}{\beta^2 n^2}\sigma^2\right).
	\end{align}
	Since $ \tilde{y} $ is a linear combination of two independent Gaussian distributions, then we have:
	\begin{align}
		\tilde{y}& \sim \mathcal{N} \left(\frac{\mu}{\beta n}-\frac{\mu}{\beta n},\frac{n_{+}}{\beta^2 n^2}\sigma^2+\frac{n_{-}}{\beta^2 n^2}\sigma^2\right),\nonumber\\
		\tilde{y} &\sim \mathcal{N} \left(0,\frac{1}{\beta^2 n}\sigma^2\right).
	\end{align}
	Now, we can show that $ \tilde{b} $ is a Gaussian variable since it is a summation of a constant value with a Gaussian variables: 
	\begin{align}
	\tilde{b} \sim \mathcal{N}\left(E(\tilde{b})=b,\sigma^2_{\tilde{b}}=\frac{\sigma^2}{\beta^2 n}\right).
	\end{align}
	Then, to analyze the probability of error we consider $ b=1 $. In this case an error occurs when $ \tilde{b}<0 $, therefore, the probability of an error is:
	\begin{align}
	\textrm{Pr}\left\{\tilde{b}<0|b=1\right\}&=\frac{1}{2}\textrm{erfc}\left(\frac{E(\tilde{b})}{\sigma_{\tilde{b}}\sqrt{2}}\right)=\frac{1}{2}\textrm{erfc}\left(\frac{\beta\sqrt{n}}{\sigma \sqrt{2}}\right).\label{errprob}
	\end{align}
	The same error probability can be obtained for $ b=-1 $. 
\end{proof}
From \eqref{errprob} we can observe that for large values of $ \beta $ and $ n $, the bit extraction error goes to zero. However, selecting large values for $ \beta $ and $ n $ will cause some latency and computational challenges for IoT devices which will be discussed later.
\subsection{Static Watermarking for Attack Detection in IoTD}
Now, using the SS method we present a technique for authentication of the signals transmitted from an IoTD to the cloud. We first generate a random pseudo-noise binary sequence with $ n $ samples. Also, for every IoTD, we define a bit stream $ s $ with $ n_{s} $ samples. Then using \eqref{watermarking}, we embed every bit of $ s $ in $ n $ samples of $ y $. Therefore, for any bit stream $ s $, we use $ nn_s $ samples of $ y $, and this embedding procedure will repeat every $ nn_s $ samples of $ y $. At the cloud, using \eqref{correlation}, we extract the bit stream. In case of a cyber attack, the received signal in the cloud will be $ \bar{y}(t) $ rather than $ w(t) $, and, hence, the extracted bit stream will differ from $ s $. Thus, the cloud will trigger an alarm for declaring the existence of a cyber attack. Fig. \ref{fig:static} shows the block diagram of static watermarking for attack detection in an IoTD.
\begin{figure}[!t]
	\centering
	\includegraphics[width=0.73\columnwidth]{./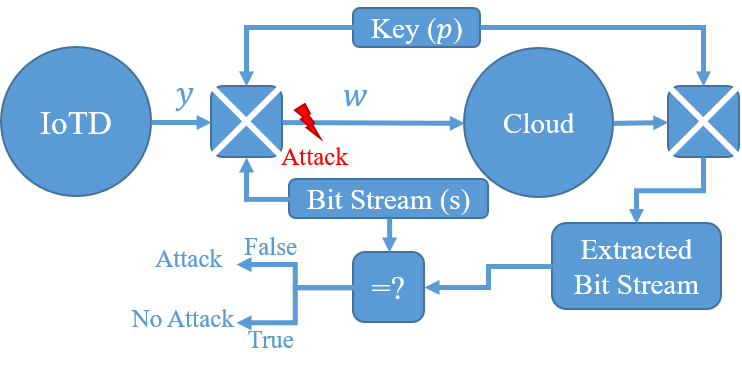}
	\vspace{-3mm}
	\caption{Static watermarking for attack detection.}
	\label{fig:static}
	\vspace{-5mm}
\end{figure}
\vspace{-0.7mm}

In our proposed watermarking scheme, $ \beta $, $ n $, and $ n_s $ play crucial roles in the security of a given IoTD. The value of $ \beta $ must be very smaller than the value of $ \sigma $. The reason is that, for comparable values of $ \beta $, an attacker can extract the key $ p $ and bit stream $ s $, since $ w(t)\simeq\beta b p(t) $ if $ |\beta b p(t)|= \beta >>|y(t)| $. Therefore, we must choose small values for $ \beta $. However, from Theorem \ref{Theorem:BER}, we know that, for small values of $ \beta $, we will have higher bit error rate. To overcome this issue, we have to increase the length of the pseudo-noise key, $ n $. Although increasing the value of $ n $ will reduce the bit error rate, for large values of $ n $, the bit extraction procedure in \eqref{correlation} will result in higher computation load and will also cause higher latency since the cloud must wait for $ nn_s $ samples from the IoTD to detect the attack. Moreover, large values of $ n_s $ will also cause larger delay in the cloud. Therefore, next, we propose a method to choose suitable values for these three parameters. $ \beta $ must be chosen such that the attacker cannot use $ w $ instead of $ p $ to extract the embedded bit. Therefore, we have to adjust $ \beta $ to cause a high bit error rate during the extraction of the hidden bit using the $ w $ sequence. Thus, for two different watermarked signals $ w_1 $ and $ w_2 $ with equal embedded bit $ b=1 $, we have:
\begin{align}
	\hat{b}&=\frac{<w_1,w_2>_{n}}{\beta^2 n^2}=\frac{<y_1+\beta b p,y_2+\beta b p>}{\beta^2 n^2}\nonumber\\
	&=\frac{<y_1,y_2>+<y_1,\beta b p>,<\beta b p,y_2>,<\beta b p,\beta b p>}{\beta^2 n^2}\nonumber\\
	&=Z_1+Z_2+Z_3+b^2,
\end{align}
where $ Z_1 $, $ Z_2 $, and $ Z_3 $ are Gaussian random variables with distributions $ \mathcal{N}\left(\frac{\mu_1}{\beta^2 n^2},\frac{\sigma_1^2}{\beta^4n^3}\right) $, $ \mathcal{N}\left(0,\frac{\sigma^2}{\beta^4 n^3}\right) $, $ \mathcal{N}\left(0,\frac{\sigma^2}{\beta^4 n^3}\right) $, respectively (the proof is analogous to Theorem \ref{Theorem:BER}). $ \mu_1 $ and $ \sigma^2_1 $ are the mean and variance of the multiplication of random variables $ y_1(t) $ and $ y_2(t) $. Then, we can write $ \hat{b} $ as follows:
\begin{align}
	\hat{b}=b^2+Z_4=b+Z_4,
\end{align}
where $ Z_4 \sim N\left(\frac{\mu_1}{\beta^2n^2},\frac{\sigma_1^2+2\sigma^2}{\beta^4n^3}\right) $. Therefore, the bit error rate incurred during the extraction of $ \hat{b} $ will be:
\begin{align}
\textrm{Pr}\left\{\hat{b}<0|b=1\right\}&=\frac{1}{2}\textrm{erfc}\left(\frac{E(\hat{b})}{\sigma_{\hat{b}}\sqrt{2}}\right)\nonumber\\&=\frac{1}{2}\textrm{erfc}\left(\frac{(1+\frac{\mu_1}{\beta^2n^2})\beta^2n\sqrt{n}}{ \sqrt{2(\sigma_1^2+2\sigma^2)}}\right).\label{errprob2}
\end{align}
Since we want to have a high bit error rate for the attacker, we can write:
\begin{align}
	\textrm{Pr}\left\{\hat{b}<0|b=1\right\}&\geq1-\underbar{P},\nonumber\\
	\frac{1}{2}\textrm{erfc}\left(\frac{(1+\frac{\mu_1}{\beta^2n^2})\beta^2n\sqrt{n}}{ \sqrt{2(\sigma_1^2+2\sigma^2)}}\right)&\geq1-\underbar{P}, \label{eq1}
\end{align}
where $ \underbar{P} $ is our desired probability of unsuccessful attack. Moreover we want to have a small extraction error for cloud as in \eqref{errprob}. Thus, we have: 
\begin{align}
		\textrm{Pr}\left\{\tilde{b}<0|b=1\right\}\leq \bar{\text{P}},\,\,\Rightarrow
		\frac{1}{2}\textrm{erfc}\left(\frac{\beta\sqrt{n}}{\sigma \sqrt{2}}\right)\leq \bar{\text{P}}, \label{eq2}
\end{align}
where $ \bar{\text{P}} $ is our desired bit extraction error probability. Therefore, from \eqref{eq1} and \eqref{eq2} we can derive the values for $ \beta $ and $ n $ which satisfy the security and performance requirement of the proposed watermarking scheme. Now, to analyze the attack detection delay, if we consider $ d $ as the acceptable delay in seconds for attack detection, we can find the value of $ n_s $:
\begin{align}
	n_s \leq \frac{df_s}{n} \label{eq3}.
\end{align}
Using \eqref{eq1}, \eqref{eq2}, and \eqref{eq3} we can find the values for the three parameters which satisfy our performance and delay constraints. The proposed SS watermarking method can detect the cyber attacks which only have the ability to change the transmitted data from an IoTD to the cloud. By choosing optimal values for the three parameters of watermarking, the cloud can identify the reliability of the transmitted attack. 

Now, consider a case in which an attacker can also collect data from the IoTDs. In this case, the attacher can launch more complex attacks by \emph{eavesdropping} the legitimate transmitted data from the IoTD. For instance, if the attacker starts to record the transmitted data from the IoTD and sum the recorded data for a long time, it can potentially reveal the key $ p $. As an example, if the attacker collects the data for $ m $ windows of size $ nn_s $ and adds this data together, it will end up having the following signal:
\begin{align}
\bar{w}_m(t)=\sum_{i=1}^m y_i(t) +m\beta b p(t),
\end{align}
where $ y_i $ is the signal received in window $ i $ from an IoTD, and $ \bar{w}_m $ is the summation of collected data. If we consider $ \sigma^2_m $ as the variance of the sum of $ m $ random variable $ y_1(t),\dots,y_m(t) $ then there will exists a value for $ m $ where $ m^2\beta^2 >> \sigma^2_m $. Therefore, the attacker can use $ \bar{w}_m \simeq m\beta b p $ as the key for watermarked signal. The reason for the success of this eavesdropping attack is due to embedding a static bit stream $ s $ in all the windows. However, if the bit stream $ s $ changes dynamically in each window of $ nn_s $ samples then, the eavesdropping attack will not succeed anymore. In the following section, we propose a dynamic bit stream generation using a deep learning method.
\section{Deep Learning for Dynamic IoT Watermarking}\label{sect:DeepMark}
To improve our security scheme, we propose a novel deep learning watermarking method for dynamically generating the bit stream $ y $ which can thwart eavesdropping attacks. In our new dynamic watermarking scheme, we use the fingerprints of the signal $ y $ generated by an IoTD to update the bit stream $ s $, dynamically. Signal fingerprints can be seen as unique identifiers of a signal that can be mapped to a bit stream. Signal processing methods such as spectral flatness, central moments, skewness, and  kurtosis can be used for extraction of fingerprints from signals \cite{audio2002Haitsma,Wavelet2012Bertoncini,Wavelet1995Learned,Mellin2012Harley}. Here, we use a deep learning framework to extract these kind of stochastic features the IoTD signals. Deep learning has been successfully applied in a wide range
of areas with significant performance improvement, including
computer vision, natural language processing, and speech recognition \cite{chen2017machine}. One of the widely-used deep learning methods for sequence classification is called LSTM \cite{chen2017machine} and \cite{speech2013Graves}. In the following, we explain how we use LSTM to extract the fingerprints from the IoTD signals.
\subsection{LSTM for Signal Fingerprinting in IoTD}
To dynamically extract fingerprints from IoTD signals, we propose an LSTM algorithm that allows an IoTD to update the bit stream based on the sequence of generated data. 

LSTMs are one of the deep recurrent neural networks (RNNs) that can store information for long periods of time and thus can learn long-term dependency of a given sequence \cite{speech2013Graves}. Essentially, LSTMs processes an input $ \left(y(1),\dots,y(nn_s)\right) $ by adding new information into a memory, and using gates which control the extent to which new information should be memorized, old information should be forgotten, and current information should be used. Therefore, the output of LSTMs will be impacted by the network activation in previous time steps and thus LSTMs are suitable for our IoT application in which we want to extract fingerprints from signals which are dependent to previous time steps and LSTMs are a sequence to sequence mapping.

\begin{figure}[!t]
	\centering
	\includegraphics[width=0.8\columnwidth]{./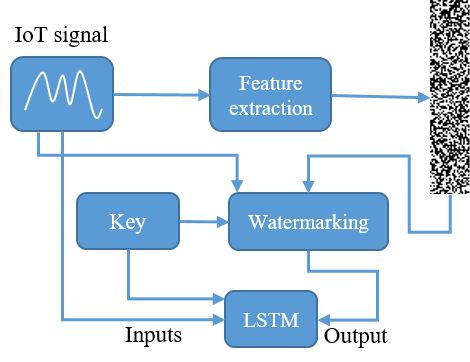}
	\caption{Training phase of an LSTM in an IoTD.}
	\label{fig:training}
	\vspace{-5mm}
\end{figure}
During the training phase, the parameters of the algorithm are learned from a given training dataset of different IoTD such as accelerometer, gyroscope and positioning devices. As in \cite{audio2002Haitsma,Wavelet2012Bertoncini,Wavelet1995Learned,Mellin2012Harley} we choose spectral flatness, mean, variance, skewness, and kurtosis as features that are extracted from a signal with length $ nn_s $ and then map these values to a bit stream with length $ n $. Next, we watermark this extracted bit stream into the original signal using a key. To train the LSTM, we use the original signal $ y $ and pseudo noise key $ p $ as input stream and the watermarked signal $ w $ as output stream. Fig. \ref{fig:training} shows the training phase using a block diagram model. Next, we illustrate how we use the trained LSTM to dynamically watermark an IoT signal.

\subsection{LSTM for Dynamic IoTD Watermarking}
At the cloud, we use an LSTM for bit extraction. To train this LSTM, we use the watermarked signal $ w $ and key $ p $ as inputs to the neural network and the features of the original signal and extracted bit stream as outputs. The block diagram model for training phase of LSTM in the cloud is shown in Fig. \ref{fig:trainingcpu}. Using the LSTM block of Fig. \ref{fig:training} at the IoTD and the LSTM block of Fig. \ref{fig:trainingcpu} at the cloud, we propose a dynamic LSTM watermarking scheme to implement an attack detector at the cloud. 

In this method, a predefined bit stream is not used since a bit stream is dynamically generated inside the LSTM blocks at the IoTD and the cloud. This dynamic bit stream generation at the hidden layers of LSTMs solves the eavesdropping attack problem, since recording and summing the IoTD signals will not increase the power of the key sequence and the attacker will not be able to extract the key and bit stream. Using this method, the IoTD inserts the generated signal and key in its LSTM block in each window of $ nn_s $ samples and produces a watermarked signal with different bit stream $ s $ in each window. At the cloud, the received watermarked signal and key are passed from the LSTM. Then, the two outputs (the extracted bits, and extracted features) are compared. In the case of dissimilarity of two sequences, the attack alarm is triggered. Fig. \ref{fig:dynamicwatermarking} shows the block diagram of dynamic LSTM watermarking for attack detection.
\begin{figure}[!t]
	\centering
	\includegraphics[width=0.75\columnwidth]{./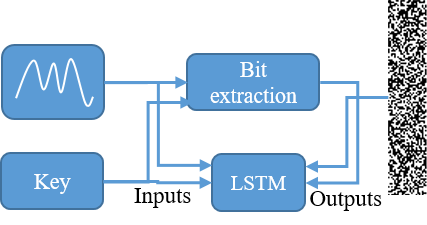}
	\caption{Training phase of an LSTM at the cloud.}
	\label{fig:trainingcpu}
	\vspace{-5mm}
\end{figure}
\begin{figure}[!t]
	\centering
	\includegraphics[width=\columnwidth]{./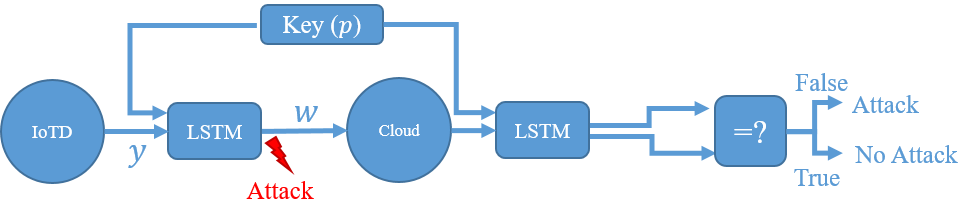}
	\caption{Dynamic watermarking for attack detection.}
	\label{fig:dynamicwatermarking}
	\vspace{-7mm}
\end{figure}
\section{Simulation Results and Analysis}\label{sect:sim}
For our simulations, we use a real dataset from an accelerometer with sampling frequency $ f_s=1 \text{kHz} $. In each simulation, we derive the optimal values for $ \beta $, $ n $, and $ n_s $ using the method proposed in Section \ref{sect:sensor} such that they satisfy the reliability and delay constraints.

Fig. \ref{fig:trainingout} shows the output of the LSTM trainer with $ n=25,n_s=25, $ and $ \beta=0.5 $. It can be seen from Fig. \ref{fig:trainingout} that the trained output of the LSTM, which is the dynamic watermarked signal, is very close to the training target $ w $. Moreover, Fig. \ref{fig:performance} shows the performance of training, in which the training converges after 269 \emph{epochs}. Here, an epoch is a measure of the number of times all the training vectors are used once to update the weights of the neural network. The training error is 0.0055 which is calculated using mean squared error. Moreover, we tested the trained LSTM on another accelerometer data, and the testing error is close to 0.02 which is acceptable for our IoT application.
\begin{figure}[!t]
	\centering
	\includegraphics[width=0.9\columnwidth]{./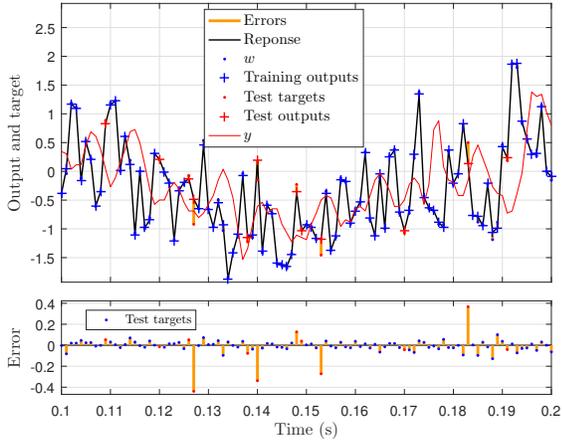}
	\caption{Comparison of IoT signal, static watermarked signal, dynamic watermarked signal, and train error.}
	\label{fig:trainingout}
	\vspace{-5mm}
\end{figure}
\begin{figure}[!t]
	\centering
	\includegraphics[width=0.95\columnwidth]{./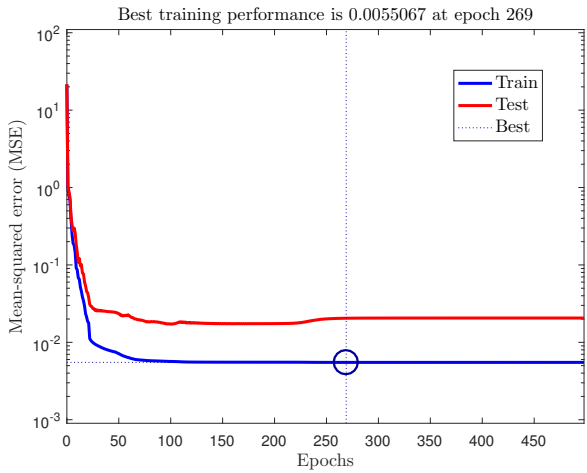}
	\caption{Training performance.}
	\label{fig:performance}
	\vspace{-5mm}
\end{figure}
\begin{figure}[!t]
	\centering
	\includegraphics[width=0.9\columnwidth]{./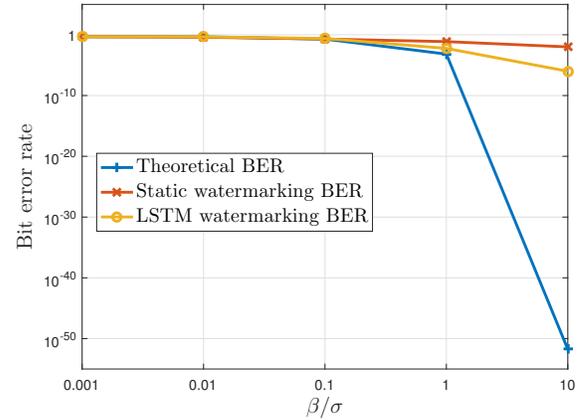}
	\caption{Bit error rate of proposed watermarking algorithms.}
	\label{fig:beta}
	\vspace{-5mm}
\end{figure}
\begin{figure*}
	\centering
	\includegraphics[width=\textwidth]{./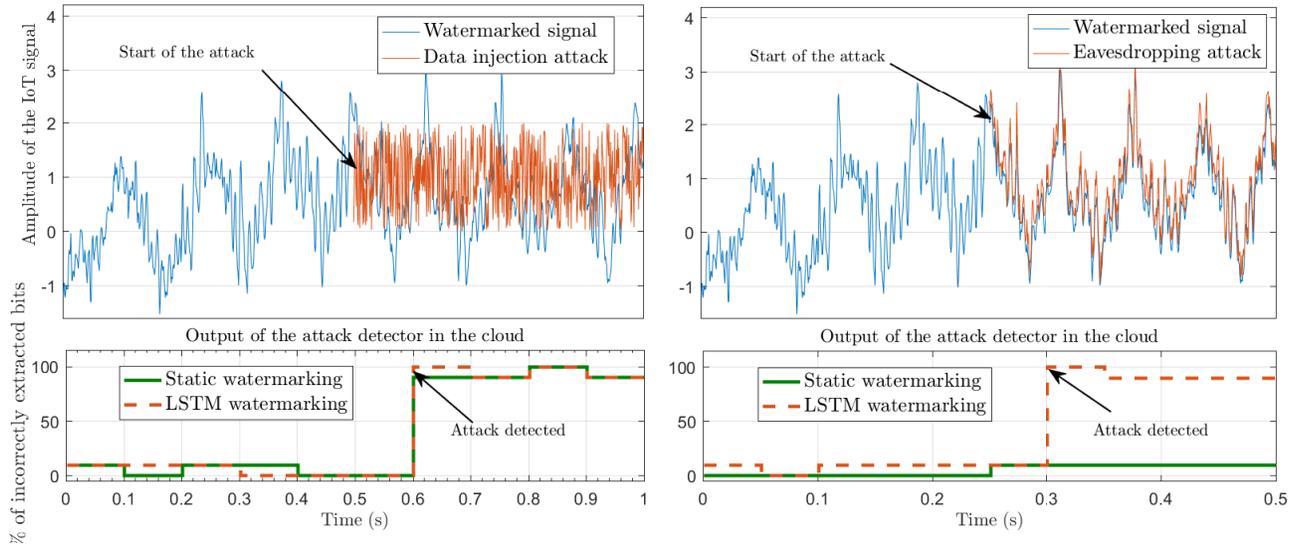}
	\caption{Attack detection analysis of static and LSTM watermarking schemes. The  y-axis in the lower figures show the percentage of difference between the extracted bit stream and the hidden bit stream. This difference is used at the cloud as a metric to detect attacks on the IoTDs.}
	\label{fig:attackdetection}
	\vspace{-5mm}
\end{figure*}
\begin{figure}[!t]
	\centering
	\includegraphics[width=0.9\columnwidth]{./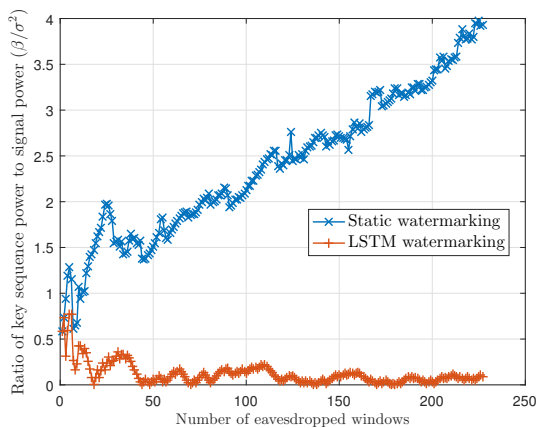}
	\caption{The ratio of key power to signal power in eavesdropping attack.}
	\label{fig:eavesdroppingpower}
	\vspace{-7mm}
\end{figure}

Fig. \ref{fig:beta} illustrates the higher performance of LSTM compared to to static watermarking in bit extraction. From \eqref{errprob}, we know that higher $ \beta/\sigma $ results in lower bit error. We can see from Fig. \ref{fig:beta} that the extraction error rate for LSTM is approximately $ 1 $ order of magnitude lower than the static watermarking when $ \beta/\sigma=1 $. This ratio gets better for higher $ \beta/\sigma $, since we can observe that the error rate of LSTM is almost $ 2 $ orders of magnitude lower than static watermarking when $ \beta/\sigma=10 $. This result allows designing attack detectors with lower delay, since we can choose lower $ n $ for LSTM which results in smaller window size and reduces the detection delay.

To analyze the functionality of our proposed watermarking schemes in attack detection, we choose a static watermarking block with $ n=10, n_s=10$ and $ \beta=0.5 $. We also train our LSTM block with these features. Then, we implement two types of attack to the signal: a) a data injection attack in which the attacker starts to change the IoT device signal, and b) an eavesdropping attack, in which the attacker records the data from the IoT device, extracts the bit stream and implements an attack with the the same watermarking bit stream. In Fig. \ref{fig:attackdetection}, the attack detectors compare the extracted bit stream with the actual hidden bit stream and the percentage of difference between these two is considered as a metric for attack detection. In other words, high difference between the two bit streams, an attack detection alarm is triggered. Fig. \ref{fig:attackdetection} shows that, for the first attack, both watermarking schemes can detect the attack. However, for eavesdropping, static watermarking cannot detect the existence of attack while the LSTM performs well. The reason is that in static watermarking the bit stream is the same for all the time windows, while in LSTM the watermarked bit stream dynamically changes for each time window. 

Fig. \ref{fig:eavesdroppingpower} illustrates how an eavesdropping attack operates against the two watermarking schemes. We can see that, in static watermarking, the attacker records the signal and by summing the recorded data of each window, increases the ratio of pseudo-noise key power to the signal power and extracts the bit stream. However, since in LSTM, the bit stream dynamically changes in each window, the summation of the recorded data will not increase the ratio of the key power to the signal power. Therefore, the attacker will not be able to extract the bit stream and key from the recorded data. In addition, we can see from Fig. \ref{fig:attackdetection} that the delay of attack detection is $ 0.1 $ seconds since the attacks starts at $ 0.5 s $ and the attack detector triggers the alarm at $ 0.6 s $. The reason is that, for the window size of $ n\times n_s /f_s=0.1 $ seconds, the cloud must wait for one time window to collect the data from the IoTD.
\section{Conclusion}\label{sect:conc}
In this paper, we have proposed a novel deep learning method based on LSTM blocks for enabling attack detection of data injection and eavesdropping in IoT devices. We have introduced two watermarking schemes in which the IoT's cloud, which collects the data from the IoT devices, can authenticate the reliability of received signals. We have shown that our proposed LSTM method is suitable for IoT security due to low complexity, small delay, and high accuracy in attack detection. Simulation results have shown that dynamic LSTM watermarking can also detect existence of complicated attacks such as eavesdropping in which the attacker collects data from  the IoT devices and designs an undetectable attack.
\def\baselinestretch{0.89}
\bibliographystyle{IEEEtran}
\bibliography{references}
	
	


\end{document}